\newcommand{\expref}[2]{\texorpdfstring{\hyperref[#2]{#1~\ref{#2}}}{#1~\ref{#2}}}
\definecolor{webgreen}{rgb}{0,.5,0}
\definecolor{webblue}{rgb}{0,0,.5}
\newcommand{\id}{\mathds{1}}
\newcommand{\tr}{\mathrm{Tr}}
\newcommand{\bra}[1]{\langle #1|}
\newcommand{\ket}[1]{|#1\rangle}
\newcommand{\eq}[1]{(\ref{#1})}
\newcommand{\trclose}{\mathrm{tr}}
\def\End{\mathrm{End}}
\def\GL{\mathrm{GL}}
\def\CC{\mathbb{C}}
\def\ZZ{\mathbb{Z}}
\newcommand{\innerprod}[2]{\langle #1| #2\rangle}
\newtheorem{theorem}{Theorem}
\newtheorem{fact}{Fact}
\newtheorem{lemma}{Lemma}
\newtheorem{definition}{Definition}
\begin{document}

\title{Yang-Baxter operators need quantum entanglement\\ to distinguish knots}
\author[1]{Gorjan Alagic}
\author[2,4]{Michael Jarret}
\author[3,4]{Stephen P. Jordan}
\affil[1]{\small{Department of Mathematical Sciences, University of Copenhagen, Copenhagen, DK}}
\affil[2]{\small{Department of Physics, University of Maryland, College Park, MD, USA}}
\affil[3]{\small{Applied and Computational Mathematics Division, National Institute of Standards and Technology, Gaithersburg, MD, USA}}
\affil[4]{\small{Joint Center for Quantum Information and Computer Science (QuICS), University of Maryland, College Park, MD, USA}}

\date{}

\bibliographystyle{unsrt}

\maketitle

\begin{abstract}
Any solution to the Yang-Baxter equation yields a family of
representations of braid groups. Under certain conditions, identified
by Turaev~\cite{T88}, the appropriately normalized trace of these
representations yields a link invariant. Any Yang-Baxter solution can
be interpreted as a two-qudit quantum gate. Here we show that if this
gate is non-entangling, then the resulting invariant of knots is
trivial. We thus obtain a general connection between topological
entanglement and quantum entanglement, as suggested by Kauffman
\emph{et al.}~\cite{K02}.
\end{abstract}

\section{Introduction}

Based on a variety of suggestive phenomena~\cite{A97, K02, KL02, KL02b, KL03, KL04, KL04b, K05, ZKG05}, Kauffman \emph{et al.} have proposed that there may be deep connections between quantum entanglement and topological entanglement. In this work, we obtain a rather general theorem demonstrating one such connection. 

The basis for this connection is a procedure introduced by Turaev, whereby one obtains a link invariant from any invertible operator which satisfies the Yang-Baxter equation~\cite{T88}. Roughly speaking, one first reorganizes the link into the closure of a braid, which can then be interpreted as the spacetime trajectory of a fixed number of particles living in a two-dimensional medium. Each braiding of a pair of adjacent particles is then replaced by a copy of the Yang-Baxter operator; finally, one computes a certain normalized trace of the operator describing the total evolution. By selecting particular solutions of the Yang-Baxter equation, this procedure yields as special cases all quantum invariants of links; some of these invariants, in turn, have played a crucial role in the study of topological quantum field theories (TQFTs) as initiated by Witten~\cite{Witten89}. In particular, one recovers both the Tsohantjis-Gould invariants~\cite{TG94, KR12} and the HOMFLY polynomial~\cite{HOMFLY}; the HOMFLY polynomial, in turn, yields the Jones and Alexander polynomials as special cases~\cite{K91}. Thus, the procedure of Turaev encompasses many of the best-known link invariants, both classical and modern. 

The relevant solutions to the Yang-Baxter equation are invertible operators on $V \otimes V$ for some $d$-dimensional vector space $V$. If the base field is the complex numbers, we may interpret such matrices as quantum gates\footnote{Physically, quantum gates are unitary; however, the notion of generating entanglement extends in an obvious way to invertible operators. Moreover, restricting to unitary solutions yields only special cases of some important invariants. For example, unitary solutions are only known to yield Jones polynomials evaluated at roots of unity.} acting on two $d$-level systems. Here, we prove that if this operator is non-entangling, then the link invariant resulting from Turaev's procedure takes the same value on all knots. Hence, in this context, one may say that quantum entanglement is necessary to detect topological entanglement.

\paragraph{Yang-Baxter operators and braids.}
The Yang-Baxter equation and its variants arise naturally in many areas of physics~\cite{PA06}. Here, we consider the constant Yang-Baxter equation, which is defined as follows. Let $V$ be a finite-dimensional complex Hilbert space, and let $R$ be a linear operator on $V \otimes V$. Then $R$ satisfies the (constant) Yang-Baxter equation if
\begin{equation}
\label{yb}
(R \otimes \id)(\id \otimes R)(R \otimes \id) = 
(\id \otimes R)(R \otimes \id)(\id \otimes R)
\end{equation}
where $\id$ is the identity operator on $V$. A linear operator $R$ satisfying \eq{yb} is called a \emph{Yang-Baxter operator}.

Recall that in the Artin presentation, the braid group $B_n$ is generated by the elementary adjacent crossings $\sigma_1,\ldots,\sigma_{n-1}$ of the $n$ strands, subject to the relations
\begin{eqnarray}
\sigma_i \sigma_j & = & \sigma_j \sigma_i  \quad \quad \quad \ \ |i-j| \geq 2
\label{R1} \\
\sigma_i \sigma_{i+1} \sigma_i & = & \sigma_{i+1} \sigma_i \sigma_{i+1}
\quad i=1,\ldots,n-2 \label{R2}.
\end{eqnarray}
An invertible Yang-Baxter operator $R$ yields an infinite family of braid group representations  
\begin{eqnarray}\label{rep1}
\rho_n^{(R)} &: B_n & \to   ~~GL(V^{\otimes n})\\
\label{rep2}
\rho_n^{(R)} &: \sigma_j & \mapsto   ~~\id_V^{\otimes j-1} \otimes R
\otimes \id_V^{\otimes n-j-1}\,,
\end{eqnarray}
one for every $n$. As an abbreviation, we let $R_i = \rho_n^{(R)}(\sigma_i)$. One easily verifies that $\rho_n^{(R)}$ is a representation of $B_n$ by noting that $R_i R_j = R_j R_i$ for $|i-j| \geq 2$ due to \eq{rep2} and $R_i R_{i+1} R_i = R_{i+1} R_i R_{i+1}$ for $n=1,\ldots,n-2$ due to \eq{yb}. 

\paragraph{Turaev's construction.}

Alexander's theorem states that any oriented link is equivalent (in the standard sense, i.e., ambient isotopic) to the trace closure $b^\trclose$ of some braid $b$~\cite{A23}. The trace closure is illustrated in \expref{Figure}{fig:trace-closure}. Markov's theorem gives necessary and sufficient conditions for a function on braids to define an invariant of oriented links via the trace closure.

\begin{figure}[h]
\begin{center}
\includegraphics{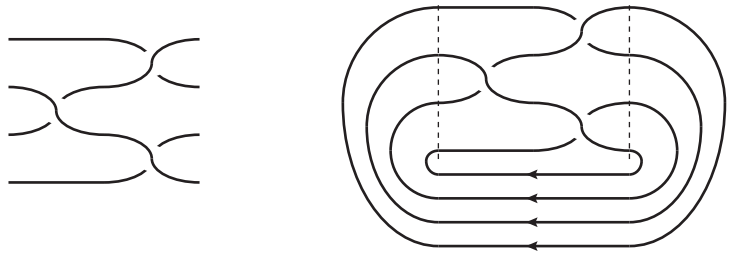}
\caption{\label{fig:trace-closure} Alexander's theorem states that any oriented link can be presented as the trace closure (right) of some braid (left). }
\end{center}
\end{figure}

\begin{theorem}[\cite{M35}]
\label{markov_thm}
Let $f$ be a function on braids. Define a function $g$ on oriented links by $g(L) = f(b)$ where $b$ is any braid such that $b^\trclose$ is equivalent to $L$. Then $g$ is well-defined as a function on equivalence classes (i.e., is an invariant of oriented links) if and only if $f$ is invariant under the following so-called Markov moves:
\begin{eqnarray}
b & \to & a b a^{-1} \textrm{ \rm{for} $a \in B_n$} \label{move1}\\
b & \to & i_{n+1}(b) \sigma_n \label{move2a} \\
b & \to & i_{n+1}(b) \sigma_n^{-1} \label{move2b}
\end{eqnarray}
where $i_{n+1} : \sigma_j \mapsto \sigma_j$ denotes the canonical inclusion of $B_n$ into $B_{n+1}$.
\end{theorem}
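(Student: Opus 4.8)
This is an ``if and only if'' whose two directions rest on two different geometric facts, so the plan is to treat them separately. Note first that Alexander's theorem guarantees that every oriented link $L$ equals $b^{\trclose}$ for at least one braid $b$, so $g$ is at least \emph{defined} on every link; the only content of the statement is that it is \emph{well-defined}, i.e.\ independent of the choice of $b$ --- equivalently, that $b_1^{\trclose}$ equivalent to $b_2^{\trclose}$ implies $f(b_1) = f(b_2)$ --- together with the claim that this is equivalent to Markov-invariance of $f$.

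For the direction ``$g$ well-defined $\Rightarrow$ $f$ Markov-invariant,'' it suffices to check that each of the moves \eq{move1}--\eq{move2b} leaves the ambient isotopy class of the closure unchanged; granting that, $f(b) = g(b^{\trclose})$ is forced to be constant on Markov-equivalence classes, since $g$ does not depend on the choice of braid. For conjugation \eq{move1} I would use the cyclic property of the closure: $(uv)^{\trclose}$ and $(vu)^{\trclose}$ are isotopic because one factor can be slid around the return arcs of the closed braid by a composition of planar isotopies and Reidemeister~II moves; taking $u = ab$, $v = a^{-1}$ gives $(aba^{-1})^{\trclose} \simeq (a^{-1}ab)^{\trclose} = b^{\trclose}$. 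For stabilization \eq{move2a}--\eq{move2b}, in the closure the new strand forms a loop meeting strand $n$ in the single crossing $\sigma_n^{\pm1}$, i.e.\ a kink removed by one Reidemeister~I move, so $(i_{n+1}(b)\sigma_n^{\pm1})^{\trclose} \simeq b^{\trclose}$. Both pictures are immediate from \expref{Figure}{fig:trace-closure}.

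The other direction, ``$f$ Markov-invariant $\Rightarrow$ $g$ well-defined,'' is the substantive half, and it is precisely (the hard direction of) Markov's theorem: if $b_1^{\trclose}$ and $b_2^{\trclose}$ are ambient isotopic then $b_1$ and $b_2$ are joined by a finite chain of the moves \eq{move1}--\eq{move2b} and their inverses; invariance of $f$ under each move then yields $f(b_1) = f(b_2)$. The plan to prove this is the classical one. By Reidemeister's theorem the two closed-braid diagrams are connected by finitely many planar isotopies and Reidemeister moves, but the intermediate diagrams need not be closed braids; one therefore runs a braiding algorithm (Alexander's threading, or the later refinements of Yamada, Vogel, or Lambropoulou--Rourke) along the whole isotopy and shows that each elementary diagram change is absorbed into a bounded number of conjugations and (de)stabilizations. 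The delicate point --- and the main obstacle --- is handling the Reidemeister moves that momentarily create arcs oriented against the braid axis, which must be re-braided at the cost of further (de)stabilizations while one argues that the whole process terminates. This bookkeeping is exactly what is carried out in \cite{M35} and in the subsequent complete treatments; since the present paper only uses the statement, I would invoke it rather than reproduce it.
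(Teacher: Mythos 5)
The paper does not prove this theorem; it is imported verbatim from \cite{M35}, so there is no in-paper argument to compare against. Your treatment is consistent with that: the easy direction (that conjugation preserves the closure up to sliding a factor around the return arcs, and that stabilization by $\sigma_n^{\pm1}$ is undone by a Reidemeister~I move) is correctly and sufficiently sketched, and the hard direction is correctly identified as the actual content of Markov's theorem and appropriately deferred to \cite{M35} and its later complete proofs rather than reproduced.
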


If $\{\rho_n:B_n \rightarrow \GL(V_n) ~|~ n=1,2,\ldots\}$ is a family of braid group representations, then one sees that the function $f(b)=\tr[\rho_n(b)]$ is invariant under the Markov move \eq{move1} due to the cyclic property of the trace. In some cases, by including sufficient correction factors, functions of the form $\tr[\rho_n(b)]$ can be made invariant under all of the Markov moves, thus yielding link invariants.

Specifically, given $R \in \GL(V \otimes V)$ satisfying the constant Yang-Baxter equation, Turaev defines the following (parameterized) function on braids \cite{T88}.
\begin{equation}\label{eq:pre-turaev-invariant}
I_{\mathcal{R}}(b) = \alpha^{-w(b)} \beta^{-n} \tr \left[ \rho_n^{(R)}(b) \cdot \mu^{\otimes n} \right]\,,
\end{equation}
where $\alpha,\beta$ are invertible scalars, $\mu \in \End(V)$, and $w : B^n \rightarrow \ZZ$ is the homomorphism defined by setting $w :\sigma_j^{\pm} \mapsto \pm 1.$ In this context, the function $w$ coincides with the notion of the writhe of a link diagram; this is simply the number of crossings of the form \includegraphics[width=0.15in]{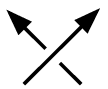} minus the number of crossing of the form \includegraphics[width=0.15in]{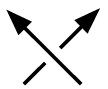}. Collectively, the set of parameters $\mathcal{R} = (R,\alpha,\beta,\mu)$ defining \eqref{eq:turaev-invariant} is called an \emph{enhanced Yang-Baxter operator}, provided the following three conditions are satisfied.
\begin{eqnarray}
\label{enh1}
\mu \otimes \mu \cdot R & = & R \cdot \mu \otimes \mu \\
\label{enh2}
{\tr}_2(R \cdot \mu \otimes \mu) & = & \alpha \beta \mu \\
\label{enh3}
{\tr}_2(R^{-1} \cdot \mu \otimes \mu) & = & \alpha^{-1} \beta \mu,
\end{eqnarray}
Here, ${\tr}_2$ denotes the partial trace over the second tensor factor. 

As Turaev notes ~\cite[Remark 3.3]{T88}, an immediate simplification of the above is possible: one easily checks that if $\mathcal R = (R, \alpha, \beta, \mu)$ is an enhanced Yang-Baxter operator, then $\mathcal R' = (\alpha^{-1}R, 1, 1, \beta^{-1}\mu)$ is also an enhanced Yang-Baxter operator, with $I_\mathcal R = I_{\mathcal R'}$. This leads to the following simplified definition.

\begin{definition}\label{def:eyb}
Let $V$ be a finite-dimensional complex Hilbert space, $R \in \GL(V \otimes V)$ a Yang-Baxter operator, and $\mu \in \End(V)$. If $R$ commutes with $\mu \otimes \mu$ and
$$
\tr_2(R \cdot \mu \otimes \mu)  = \tr_2(R^{-1} \cdot \mu \otimes \mu) = \mu,
$$
then we say that the pair $\mathcal R = (R, \mu)$ is an enhanced Yang-Baxter operator. In that case, given any braid $b$ we define
\begin{equation}
\label{eq:turaev-invariant}
I_{\mathcal R}(b) = \tr \left[ \rho_n^{(R)}(b) \cdot \mu^{\otimes n} \right]\,.
\end{equation}
\end{definition}

\noindent As Turaev shows, each enhanced Yang-Baxter operator yields a link invariant via \eqref{eq:turaev-invariant}; the proof simply verifies the conditions of \expref{Theorem}{markov_thm} via straightforward calculations.

\begin{theorem}{\cite{T88}}
\label{turaev_thm}
If $\mathcal{R}$ is an enhanced Yang-Baxter operator then $I_{\mathcal{R}}(b^{\mathrm{tr}})$ is an invariant of oriented links.
\end{theorem}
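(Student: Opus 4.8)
The plan is to invoke Markov's theorem (\expref{Theorem}{markov_thm}): since every oriented link is the trace closure of some braid, it suffices to show that the braid function $f(b) = \tr[\rho_n^{(R)}(b)\cdot\mu^{\otimes n}]$ is invariant under the three Markov moves \eqref{move1}, \eqref{move2a}, \eqref{move2b}. Each move will reduce to a short trace computation that uses exactly one of the two defining properties of the enhanced Yang-Baxter operator $\mathcal R=(R,\mu)$ from \expref{Definition}{def:eyb}, so the bulk of the work is organizing these computations.

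For the conjugation move \eqref{move1}, the key observation is that $\mu^{\otimes n}$ commutes with $\rho_n^{(R)}(a)$ for every $a\in B_n$. Indeed, $\mu\otimes\mu$ commutes with $R$, so $\mu^{\otimes n}$ commutes with each $R_i=\id_V^{\otimes i-1}\otimes R\otimes\id_V^{\otimes n-i-1}$ (on the two factors where $R_i$ is nontrivial it reads $\mu\otimes\mu$, and identities commute with everything else), hence with each $R_i^{\pm1}$ and thus with any word in the $R_i^{\pm1}$. Combining this with the cyclic invariance of the trace,
\begin{equation*}
f(aba^{-1}) = \tr\bigl[\rho_n^{(R)}(a)\,\rho_n^{(R)}(b)\,\rho_n^{(R)}(a)^{-1}\mu^{\otimes n}\bigr] = \tr\bigl[\rho_n^{(R)}(a)^{-1}\rho_n^{(R)}(a)\,\rho_n^{(R)}(b)\,\mu^{\otimes n}\bigr] = f(b).
\end{equation*}

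For the stabilization moves \eqref{move2a} and \eqref{move2b}, I would view $V^{\otimes(n+1)}=V^{\otimes n}\otimes V$ and use $\rho_{n+1}^{(R)}(i_{n+1}(b)) = \rho_n^{(R)}(b)\otimes\id_V$ together with $\rho_{n+1}^{(R)}(\sigma_n^{\pm1}) = \id_V^{\otimes n-1}\otimes R^{\pm1}$, where $R^{\pm1}$ acts on the last two tensor factors. Writing the full trace as $\tr = \tr_{V^{\otimes n}}\circ\,{\tr}_{n+1}$ with ${\tr}_{n+1}$ the partial trace over the last factor, and noting that $\rho_n^{(R)}(b)\otimes\id_V$ is trivial on that factor while $(\id_V^{\otimes n-1}\otimes R^{\pm1})\,\mu^{\otimes(n+1)} = \mu^{\otimes(n-1)}\otimes(R^{\pm1}\cdot\mu\otimes\mu)$, the computation collapses to
\begin{equation*}
f\bigl(i_{n+1}(b)\,\sigma_n^{\pm1}\bigr) = \tr_{V^{\otimes n}}\Bigl[\rho_n^{(R)}(b)\cdot\bigl(\mu^{\otimes(n-1)}\otimes{\tr}_2(R^{\pm1}\cdot\mu\otimes\mu)\bigr)\Bigr].
\end{equation*}
The two defining identities ${\tr}_2(R\cdot\mu\otimes\mu) = {\tr}_2(R^{-1}\cdot\mu\otimes\mu) = \mu$ then make the right-hand side equal to $\tr_{V^{\otimes n}}[\rho_n^{(R)}(b)\cdot\mu^{\otimes n}] = f(b)$, which is precisely what the two stabilization moves require.

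Having verified invariance under all three moves, \expref{Theorem}{markov_thm} gives that sending an oriented link $L$ to $I_{\mathcal R}(b)$ — for any braid $b$ with $b^{\trclose}$ equivalent to $L$ — is a well-defined link invariant. The only step demanding genuine care is the bookkeeping in the stabilization computation: one must track precisely which tensor factors each operator acts on, and justify passing ${\tr}_{n+1}$ through $\rho_n^{(R)}(b)\otimes\id_V$ (legitimate exactly because that operator is trivial on the last factor) so that the partial trace lands only on $R^{\pm1}\cdot\mu\otimes\mu$. Everything else is formal manipulation of traces plus the homomorphism property of $\rho_n^{(R)}$, which is already in hand.
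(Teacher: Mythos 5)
Your proposal is correct and follows exactly the route the paper indicates: the paper defers to Turaev and states only that ``the proof simply verifies the conditions of \expref{Theorem}{markov_thm} via straightforward calculations,'' which is precisely what you carry out --- conjugation invariance from cyclicity of the trace plus the commutation of $\mu^{\otimes n}$ with the representation, and stabilization invariance from the two partial-trace conditions. Your bookkeeping for the partial trace over the last tensor factor is sound, so nothing further is needed.
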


\paragraph{The present work.}

To state our result, we recall some basic terminology from quantum information. We say that a vector $w \in V \otimes V$ is a \emph{product state} if $w = a \otimes b$ for some $a, b \in V$; a vector which is not a product state is said to be \emph{entangled}. We also say that an invertible linear operator on $V \otimes V$ is \emph{non-entangling} if it maps product states to product states. For normalized states and unitary operators, this coincides with the standard definitions regarding entanglement in quantum theory. Using this terminology, our result can be stated as follows.

\begin{theorem}
\label{mainthm}
Let $\mathcal{R} = (R, \mu)$ be an enhanced Yang-Baxter operator with invertible $\mu$. If $R$ is non-entangling, then $I_{\mathcal{R}}$ is constant on knots.
\end{theorem}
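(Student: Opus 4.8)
The plan is to reduce to a structural classification of invertible non-entangling operators on $V \otimes V$, and then to evaluate $I_{\mathcal R}$ directly on an arbitrary braid whose trace closure is a knot. We may assume $\dim V \ge 2$, since otherwise $R$ and $\mu$ are scalars and the statement is trivial. First I would establish the key preliminary: \emph{an invertible operator $R$ on $V \otimes V$ is non-entangling if and only if $R = A \otimes B$ or $R = (A \otimes B)\,P$ for some $A, B \in \GL(V)$, where $P$ is the swap $v \otimes w \mapsto w \otimes v$.} The conceptual reason is that the product vectors form the affine cone over the image of the Segre embedding $\mathbb P(V) \times \mathbb P(V) \hookrightarrow \mathbb P(V \otimes V)$, whose automorphisms as a projective variety are generated by $\GL(V) \times \GL(V)$ together with the interchange of the two factors; since this embedding is linearly normal, each such automorphism is induced by a linear automorphism of the ambient space, which gives precisely the two listed forms. (An elementary alternative: if $a \otimes b$ always maps to a product vector, then fixing $a$ and varying $b$, the observation that a sum of two nonzero product vectors is again a product vector only when the summands agree up to scalars in one of the two slots forces, globally, one output slot to depend linearly on $b$ alone, and a short case analysis then recovers the classification.) I expect this lemma, rather than anything that follows it, to be the part requiring the most care, in particular in treating the degenerate low-dimensional configurations cleanly.

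Consider first the case $R = A \otimes B$. Substituting into the Yang--Baxter equation \eqref{yb} and comparing the three tensor slots forces $A^2 \propto A$ and $B^2 \propto B$; since $A, B$ are invertible, $A$ and $B$ are scalars, so $R = c \cdot \id_{V \otimes V}$ for some scalar $c$. Then $\rho_n^{(R)}(b) = c^{\,w(b)}\,\id$, so $I_{\mathcal R}(b) = c^{\,w(b)}(\tr \mu)^n$, while the enhancement conditions $\tr_2(R^{\pm 1} \cdot \mu \otimes \mu) = \mu$ become $c^{\pm 1}\tr \mu = 1$, forcing $c = \pm 1$ and $\tr \mu = c$. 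Hence $I_{\mathcal R}(b) = c^{\,n + w(b)}$ in this case.

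Now consider $R = (A \otimes B)P$. A short computation on $V^{\otimes 3}$ shows that the Yang--Baxter equation for such $R$ is equivalent to $AB = BA$. Using the identities $\tr_2((X \otimes Y)P) = XY$ and $\tr_2(P(X \otimes Y)) = YX$, the enhancement conditions $\tr_2(R^{\pm 1} \cdot \mu \otimes \mu) = \mu$ become $A\mu B\mu = \mu$ and $B^{-1}\mu A^{-1}\mu = \mu$; the first yields $B = \mu^{-1}A^{-1}$, and substituting this into the second yields $\mu^2 = \id$, after which $AB = BA$ forces $[A, \mu] = [B, \mu] = 0$ and $AB = BA = \mu$. (The remaining requirement $[R, \mu \otimes \mu] = 0$ is then automatic.) Next, since $R(x \otimes y) = Ay \otimes Bx$, the operator $R_j$ replaces the contents of sites $j, j+1$ by $(Ax_{j+1}, Bx_j)$ and $R_j^{-1}$ replaces them by $(B^{-1}x_{j+1}, A^{-1}x_j)$; iterating this over a braid word, one gets $\rho_n^{(R)}(b) = \Pi_\pi \cdot (M_1 \otimes \cdots \otimes M_n)$, where $\pi \in S_n$ is the image of $b$ under the canonical surjection $B_n \to S_n$, $\Pi_\pi$ is the associated permutation operator on $V^{\otimes n}$, and each $M_s$ is a product of factors from $\{A^{\pm 1}, B^{\pm 1}\}$. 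Because $AB = BA$, each $M_s = A^{a_s}B^{b_s}$, and counting contributions crossing by crossing (each positive crossing adds $1$ to each of the two exponent sums, each negative crossing subtracts $1$) gives $\sum_s a_s = \sum_s b_s = w(b)$. Since $A, B, \mu$ pairwise commute, so do $M_1, \dots, M_n, \mu$.

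To finish, take any $b \in B_n$ whose trace closure is a knot; equivalently, $\pi$ is an $n$-cycle. In the case $R = (A \otimes B)P$, the standard expression of $\tr[\Pi_\pi(M_1 \otimes \cdots \otimes M_n)]$ as a product, over the cycles of $\pi$, of traces of the corresponding ordered products then collapses to a single factor, giving $I_{\mathcal R}(b) = \tr\!\big[\,\textstyle\prod_s M_s\,\mu\,\big] = \tr\!\big[\mu^n (AB)^{w(b)}\big] = \tr\!\big[\mu^{\,n + w(b)}\big]$; in the case $R = c \cdot \id$ we found $I_{\mathcal R}(b) = c^{\,n + w(b)}$. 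It remains only to observe a parity fact: since $\pi$ is an $n$-cycle, $\operatorname{sgn}(\pi) = (-1)^{n-1}$, and $\operatorname{sgn}(\pi) = (-1)^{w(b)}$ as well since each generator $\sigma_j^{\pm 1}$ maps to a transposition (so $\operatorname{sgn}(\pi)$ is $(-1)$ to the word length of $b$, which has the same parity as $w(b)$); hence $n + w(b)$ is odd. Therefore, using $\mu^2 = \id$ (respectively $c^2 = 1$), $I_{\mathcal R}(b) = \tr \mu$ (respectively $c = \tr \mu$) in all cases, independently of $b$. Since $I_{\mathcal R}$ is already an invariant of oriented links by \expref{Theorem}{turaev_thm}, it follows that $I_{\mathcal R}$ is constant on knots, with common value $\tr \mu$ — the value it takes on the unknot.
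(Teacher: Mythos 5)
Your proof is correct, and while it shares the paper's skeleton---classify invertible non-entangling operators into the forms $A\otimes B$ and $(A\otimes B)\circ S$, use the Yang--Baxter equation to force $R$ scalar in the first case and $AB=BA$ in the second, and use the enhancement conditions to force $\mu = AB$ with $\mu^2=\id$---your endgame is genuinely different and, in one respect, cleaner. The paper finishes by citing the topological fact that any knot diagram can be unknotted by crossing changes, checking that a crossing change replaces a commuting pair $F,G$ in the traced product by $G^{-1},F^{-1}$ and that $FG=G^{-1}F^{-1}$, so the invariant is unchanged move by move. You instead evaluate the invariant in closed form on an arbitrary braid $b$ whose closure is a knot, obtaining $I_{\mathcal R}(b^{\mathrm{tr}})=\tr\bigl[\mu^{\,n+w(b)}\bigr]$ (resp.\ $c^{\,n+w(b)}$ in the product case), and then observe the parity identity $(-1)^{w(b)}=\operatorname{sgn}(\pi)=(-1)^{n-1}$ for an $n$-cycle $\pi$, so that $n+w(b)$ is always odd and the invariant equals $\tr\mu$ on every knot. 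This bypasses the unknotting input from the paper's reference [K91] entirely, pins down the explicit common value $\tr\mu$, and in the product case extracts $c=\pm1$ and $\tr\mu=c$ directly from the enhancement conditions rather than from evaluating on two presentations of the unknot; your derivation of $[\mu,A]=[\mu,B]=0$ as a consequence of $\mu=AB=BA$ (making $[R,\mu\otimes\mu]=0$ automatic) is also a small simplification over the paper's argument via the phase $q$. The one place where your write-up is thinner than the paper's is the classification of non-entangling operators: the paper gets it immediately from the Marcus--Moyls theorem on rank-one preservers via the isomorphism $V\otimes V\cong\End(V)$, whereas you only sketch two routes (Segre-variety automorphisms, or an elementary case analysis); both sketches are viable, and since the result is standard this is a presentational rather than a substantive gap, but if you want a short self-contained proof the Marcus--Moyls reduction is the economical choice. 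You also invoke the cycle-decomposition formula $\tr[\Pi_\pi(X_1\otimes\cdots\otimes X_n)]=\prod_{\text{cycles}}\tr[\cdots]$ as standard where the paper proves it as a lemma; that is fine, but note the paper's version is doing real work for the multi-component remarks elsewhere in the text.
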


\noindent Note that, as shown in \expref{Appendix}{noninvert}, invariants obtained from non-invertible $\mu$ can always be re-expressed using invertible $\mu$. Thus, in restricting to invertible $\mu$ in \expref{Theorem}{mainthm} there is no loss of essential generality.

In addition to the above, we show that for links, the invariant arising from any non-entangling Yang-Baxter operator admits a straightforward and efficient formula: it is a product over the components of L of the trace of a certain operator on $V$. This operator is easy to read off from the trace closure of the circuit diagram, as described in \expref{Lemma}{lem:swap-product} below. This allows us to calculate the invariant exactly in time polynomial in the number of strands, the length of the braid, and $\dim V$.

Before continuing to the proofs, we make a brief remark regarding circuit diagram notation, which we will make use of throughout the paper. It is often convenient to think about the operators $\rho_n^{(R)}(b)$ (and other operators on tensor powers of $V$) in terms of their circuit diagrams. To draw such a diagram, start with a braid diagram for $b$, and replace each positive crossing with a box labeled ``$R$'' and each negative crossing with a box labeled ``$R^{-1}$.'' When we know more about $R$ (e.g., it has the form $F \otimes G$), we can expand the diagram by replacing each of these boxes by a circuit diagram for $R$ or $R^{-1}$, as appropriate. Following conventions, we will always read circuit diagrams left-to-right; note that this is the opposite of how we write the corresponding composition of linear operators. The same conventions will apply for braid diagrams (left-to-right) and words in the braid generators (right-to-left).

\section{Proofs}

\subsection{Characterization of non-entangling operators}

We first describe a simple and well-known\footnote{When restricted to
  unitary gates, this characterization is sometimes referred to as
  Brylinski's theorem \cite{BB02}.} characterization of
non-entangling operators. We will need the following result of Marcus
and Moyls~\cite{marcus1959}.

\begin{theorem}\label{marcus-moyls-thm}
Let $T$ be a linear transformation on the space $M_n$ of $n \times n$ complex matrices. If the set of rank one matrices is invariant under $T$, then there exist invertible matrices $A$ and $B$ in $M_n$ such that either $T(X) = AXB$ for all $X \in M_n$, or $T(X) = AX^{\intercal} B$ for all $X \in M_n$. 
\end{theorem}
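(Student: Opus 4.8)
The plan is to exploit the geometry of rank-one matrices. Identify a rank-one matrix with an outer product $xy^{\intercal}$, where $x,y\in\CC^{n}$ are nonzero column vectors; the hypothesis says precisely that $T(xy^{\intercal})$ has rank one --- in particular is nonzero --- for every such pair. We may assume $n\geq 2$, the case $n=1$ being immediate. The key structural input is a classification of the linear subspaces $W\subseteq M_{n}$ whose nonzero elements all have rank exactly one: any such $W$ is contained in $\{uz^{\intercal}:z\in\CC^{n}\}$ for some fixed $u$ (say, of \emph{column type}) or in $\{zv^{\intercal}:z\in\CC^{n}\}$ for some fixed $v$ (of \emph{row type}). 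This is elementary: fixing a nonzero $A=x_{1}v_{1}^{\intercal}\in W$, for any other $B=x_{2}v_{2}^{\intercal}\in W$ the sum $A+B$ has rank $\leq 1$, which is impossible if $\{x_{1},x_{2}\}$ and $\{v_{1},v_{2}\}$ are each independent; playing $A$ and a suitable $B_{0}$ against a general third element of $W$ promotes this to the stated dichotomy. Since $\{uz^{\intercal}:z\}$ has dimension $n$, any such $W$ of dimension exactly $n$ \emph{equals} one of these two maximal spaces.

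Next, for nonzero $x,y$ put $C_{y}=\{xy^{\intercal}:x\in\CC^{n}\}$ and $R_{x}=\{xy^{\intercal}:y\in\CC^{n}\}$, each an $n$-dimensional subspace of $M_{n}$. Every nonzero element of $T(C_{y})$ equals $T(xy^{\intercal})$ for some $x\neq 0$, hence has rank exactly one; so $T$ is injective on $C_{y}$ (a nonzero kernel element would be a rank-one matrix mapping to $0$), $\dim T(C_{y})=n$, and $T(C_{y})$ is therefore of column or row type by the lemma --- and likewise for each $T(R_{x})$. The crux is to show that, over $\CC$, all the spaces $T(C_{y})$ have the same type, all the spaces $T(R_{x})$ have the same type, and these two types are \emph{opposite}. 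For oppositeness: if $T(C_{e_{j}})$ and $T(R_{e_{i}})$ were both of column type, then $T(E_{ij})\in T(C_{e_{j}})\cap T(R_{e_{i}})$ would have column space equal to two prescribed lines, which therefore coincide for all $i,j$; hence every $T(E_{ij})$, and so every $T(X)$, would have column space inside one fixed line, forcing $T(M_{n})\subseteq\{pz^{\intercal}:z\}$, of dimension $n$. Then $T$ maps the $2n$-dimensional space $C_{e_{1}}\oplus C_{e_{2}}$ onto a space of dimension $n$, so $K:=\ker(T|_{C_{e_{1}}\oplus C_{e_{2}}})$ has dimension $n$; identifying $C_{e_{1}}\oplus C_{e_{2}}$ with $\CC^{n}\oplus\CC^{n}$ via $(a,b)\mapsto ae_{1}^{\intercal}+be_{2}^{\intercal}$, either $K$ meets the first summand (giving $ae_{1}^{\intercal}\in\ker T$, a contradiction) or the obvious endomorphism of $\CC^{n}$ read off from $K$ has an eigenvector $b_{0}$, giving the nonzero rank-one matrix $b_{0}(\lambda e_{1}+e_{2})^{\intercal}\in\ker T$ --- again a contradiction. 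For uniformity of type along the $C$-family: if $T(C_{y_{1}})$ were column type and $T(C_{y_{2}})$ row type, consider the pencil $y_{t}=y_{1}+ty_{2}$; the locus of parameters where $T(C_{y_{t}})$ is column type, resp.\ row type, is a proper closed subvariety of the projective line (cut out by vanishing of $2\times2$ minors), the two loci are disjoint (no $n$-dimensional space is both types, as $n\geq2$), and their union is everything --- impossible over $\CC$. So one type prevails along the entire $C$-family, and likewise for the $R$-family. This is exactly where algebraic closedness of $\CC$ is essential; over $\RR$ additional families of rank-one preservers appear.

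With the dichotomy established, say $T(R_{e_{i}})$ is of column type with direction $w_{i}$ and $T(C_{e_{j}})$ is of row type with direction $v_{j}$ (the other case is identical up to a transpose). Then $T(E_{ij})$ lies in both $T(R_{e_{i}})$ and $T(C_{e_{j}})$, pinning its column and row directions, so $T(E_{ij})=c_{ij}\,w_{i}v_{j}^{\intercal}$ for scalars $c_{ij}\neq 0$; extending linearly gives $T(xy^{\intercal})=W\,\mathrm{diag}(x)\,C\,\mathrm{diag}(y)\,U^{\intercal}$, where $W=[\,w_{1}|\cdots|w_{n}\,]$, $U=[\,v_{1}|\cdots|v_{n}\,]$, $C=(c_{ij})$. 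Computing $T(R_{e_{i}})$ from this formula and comparing dimensions with the known value $n$ (injectivity) forces $U$ invertible; symmetrically $W$ is invertible; and then requiring $T(xy^{\intercal})$ to have rank one for all $x,y$ with no zero coordinates forces $\mathrm{rank}(C)=1$, say $c_{ij}=a_{i}b_{j}$ with all $a_{i},b_{j}\neq 0$. Absorbing $\mathrm{diag}(a)$ and $\mathrm{diag}(b)$ into $W$ and $U$ yields $T(xy^{\intercal})=(Ax)(B^{\intercal}y)^{\intercal}=A(xy^{\intercal})B$ with $A,B$ invertible, hence $T(X)=AXB$ for all $X$ by linearity; the mirror case produces $T(X)=AX^{\intercal}B$.

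The main obstacle is the middle step: upgrading the purely combinatorial ``rank-one preserving'' hypothesis to the rigid statement that the two rulings of rank-one matrices are carried \emph{consistently} --- each entirely to column type or entirely to row type, and to opposite types. Everything before it is an elementary subspace lemma and everything after it is bookkeeping, but this step genuinely uses that the base field is algebraically closed (via eigenvectors and the irreducibility of the projective line), and it is precisely where the two alternatives $AXB$ versus $AX^{\intercal}B$ --- and the absence of any further possibility --- crystallize.
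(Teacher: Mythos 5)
The paper offers no proof of this statement: it is quoted directly from Marcus and Moyls \cite{marcus1959} and used as a black box, so there is nothing internal to compare your argument against. Your proof is a correct, self-contained argument along the classical lines for rank-one preservers: classify the maximal linear spaces of matrices of rank at most one as the two rulings $\{uz^{\intercal}: z\}$ and $\{zv^{\intercal}: z\}$, show that $T$ is injective on each ruling member and carries each of the two families consistently to a single type, with the two types opposite, and then read off $A$ and $B$ from the images of the $E_{ij}$. The essential points are all present and correct: injectivity of $T$ on each $C_y$ and $R_x$, the pencil/connectedness-of-$\mathbb{P}^1$ argument for uniformity of type (this, together with the eigenvector step, is indeed where algebraic closedness enters), and the final bookkeeping forcing $U$ and $W$ invertible and $\mathrm{rank}(C)=1$. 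Two presentational wrinkles, neither a genuine gap: first, the ``oppositeness'' argument is stated for a single pair $(i,j)$ but its conclusion (``coincide for all $i,j$,'' hence all images lie in one $n$-dimensional ruling) requires that \emph{both entire families} be of column type, so uniformity must logically precede it --- you prove both halves, but the order should be swapped or the hypothesis restated. Second, the kernel $K$ of $T$ restricted to $C_{e_1}\oplus C_{e_2}$ has dimension \emph{at least} $n$ rather than exactly $n$; in the strict case $K$ already meets the first summand by dimension count, so the contradiction goes through either way. With those adjustments the argument is sound and could legitimately replace the citation.
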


Recall that non-entangling operators are precisely the invertible elements of $\End(V \otimes V)$ which map product states to product states. We can characterize all non-entangling operators as follows. Informally speaking, the characterization states that the only nontrivial example is the swap gate $S : a \otimes b \mapsto b \otimes a$.

\begin{theorem}
Let $V$ be a finite-dimensional complex vector space, and $M \in \GL(V \otimes V)$ a non-entangling operator. Then there exist $A, B \in \GL(V)$ such that either $M = A \otimes B$ or $M = (A \otimes B) \circ S$.
\end{theorem}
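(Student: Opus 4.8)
The plan is to transport the problem into the language of linear maps on matrix space and then invoke the Marcus--Moyls theorem (\expref{Theorem}{marcus-moyls-thm}). Write $n = \dim V$ and fix a basis, so that $\End(V) \cong M_n$. First I would introduce the linear isomorphism $\phi : V \otimes V \to \End(V)$ determined by $\phi(a \otimes b) = a\,b^{\intercal}$ (a column times a row), extended by linearity. The point of $\phi$ is that it carries the nonzero product states of $V \otimes V$ exactly onto the rank-one matrices of $M_n$: each $a b^{\intercal}$ with $a, b \neq 0$ has rank one, and conversely every rank-one matrix arises this way. Equivalently, $\phi$ maps the set of product states onto the set of matrices of rank at most one.

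Next I would set $T = \phi \circ M \circ \phi^{-1}$, a linear automorphism of $M_n$. Since $M$ is invertible, $T$ is invertible; since $M$ maps product states to product states, $T$ maps the set of rank-$\leq 1$ matrices into itself. I would then upgrade this to invariance (i.e.\ $T$ maps that set \emph{onto} itself, so in particular the rank-one matrices are invariant): the rank-$\leq 1$ matrices form an irreducible Zariski-closed cone of dimension $2n-1$ in $M_n$, and an invertible linear map sends it to an irreducible closed subvariety of the same dimension contained in it, which therefore must be the whole cone. (Alternatively, one checks directly that $M^{-1}$ is again non-entangling.) Now \expref{Theorem}{marcus-moyls-thm} applies and produces $A, B \in \GL(V)$ such that either $T(X) = AXB$ for all $X \in M_n$, or $T(X) = AX^{\intercal}B$ for all $X \in M_n$.

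It remains to translate each case back through $\phi$. In the first case, $T(ab^{\intercal}) = A a b^{\intercal} B = (Aa)(B^{\intercal}b)^{\intercal} = \phi\bigl((A \otimes B^{\intercal})(a \otimes b)\bigr)$; since the vectors $a \otimes b$ span $V \otimes V$ and $\phi$ is injective, this forces $M = A \otimes B^{\intercal}$. In the second case, $T(ab^{\intercal}) = A(ab^{\intercal})^{\intercal}B = A(ba^{\intercal})B = (Ab)(B^{\intercal}a)^{\intercal} = \phi\bigl((A \otimes B^{\intercal})\,S\,(a \otimes b)\bigr)$, whence $M = (A \otimes B^{\intercal}) \circ S$. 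Relabeling $B^{\intercal}$ as $B$ (still in $\GL(V)$) gives exactly the claimed dichotomy; invertibility of the two tensor factors is inherited from Marcus--Moyls, or seen directly from $\det(A \otimes B) = (\det A \,\det B)^{n} \neq 0$.

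The step I expect to be the main obstacle is the middle one: one must take care that the hypothesis of \expref{Theorem}{marcus-moyls-thm} is genuinely satisfied, namely that rank-one matrices are mapped \emph{onto} — not merely into — the rank-one matrices. This is precisely where the invertibility of $M$ is used, via the short irreducibility/dimension argument above (or via the remark that $M^{-1}$ is non-entangling). The remainder is just bookkeeping with the isomorphism $\phi$ and the observation that transposes and tensor products of invertible operators are invertible.
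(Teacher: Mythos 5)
Your proof is correct and follows essentially the same route as the paper: conjugate $M$ by the isomorphism $\phi(a\otimes b)=ab^{\intercal}$ to get a rank-one-preserving map on $M_n$, apply the Marcus--Moyls theorem, and translate the two resulting forms back to $A\otimes B$ and $(A\otimes B)\circ S$. The only differences are cosmetic: you spell out the into-versus-onto point and the transpose bookkeeping, which the paper's proof passes over tersely.
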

\begin{proof}
We will make use of the isomorphism $\phi: V \otimes V \rightarrow \End(V)$ defined by $\phi(a \otimes b) = a \otimes b^*$ (where $b^*$ denotes the dual of $b$), as well as the induced isomorphism $\Phi : \End(V \otimes V) \rightarrow \End(\End(V))$ on operators. Using the fact that $\End(V \otimes V) \cong \End(V) \otimes \End(V)$, we may explicitly write the latter isomorphism as $\Phi(A \otimes B) : X \mapsto AXB$ for all $X \in \End(V)$.

Note that $\phi$ maps product states to rank-one matrices. It follows that $\Phi$ maps non-entangling operators on $V \otimes V$ to operators on $\End(V)$ which preserve rank one matrices. By \expref{Theorem}{marcus-moyls-thm}, there exist invertible $A, B \in \End(V)$ such that either (i.) $\Phi(M):X \mapsto AXB$ or (ii.) $\Phi(M): X \mapsto AX^{\intercal}B$.  By the definition of $\Phi$, we see that $M = A \otimes B$ in the first case, and $M = (A \otimes B) \circ S$ in the second case.
\end{proof}

In terms of circuit diagrams, non-entangling operators $R$ are expressible in one of two forms:
\[
\includegraphics[width=2in]{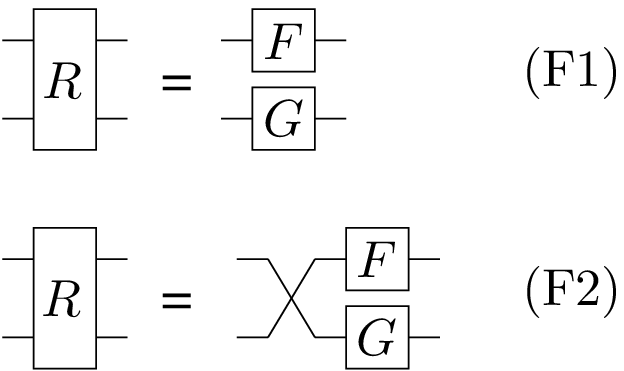}
\]
where $\includegraphics[width=0.17in]{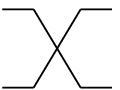}$ denotes the swap gate $S$. We will also make frequent use of the following straightforward fact about product states; it follows directly from the definition $\innerprod{a \otimes b}{c \otimes d} = \innerprod{a}{c}\innerprod{b}{d}$ of the inner product on tensor products of Hilbert spaces.

\begin{fact}\label{fact:equal-product-states}
Let $V$ be a finite-dimensional complex Hilbert space, and $a, b, c, d \in V$ such that $a \otimes b = c \otimes d$. Then $a = \alpha c$ and $b = \alpha^{-1} d$ for some $\alpha \in \CC$.
\end{fact}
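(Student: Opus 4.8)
The statement is elementary and, as the surrounding text already notes, it reduces to the multiplicativity $\innerprod{a\otimes b}{c\otimes d}=\innerprod{a}{c}\innerprod{b}{d}$ of the inner product on a tensor product; I expect no real obstacle, so the ``plan'' is mainly about organizing a clean two-line argument and handling one edge case. First I would dispose of the degenerate case: if $a\otimes b=c\otimes d$ is the zero tensor, then some vector on each side vanishes and no such scalar need exist (e.g.\ $a=b=0$ while $c=0\neq d$). So the claim is really about \emph{nonzero} product states — which is the only way it is used in the paper — and I would either add that as a standing hypothesis or simply assume $a\otimes b=c\otimes d\neq 0$ from the outset, which already forces all of $a,b,c,d$ to be nonzero.

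Granting that, the quickest route is to apply a partial functional. Pick $v\in V$ with $\innerprod{v}{c}\neq 0$ (possible since $c\neq 0$) and apply the linear map $\innerprod{v}{\cdot}\otimes\id_V\colon V\otimes V\to V$ to both sides of $a\otimes b=c\otimes d$, obtaining $\innerprod{v}{a}\,b=\innerprod{v}{c}\,d$. Hence $d=\gamma b$ with $\gamma:=\innerprod{v}{a}/\innerprod{v}{c}$, and $\gamma\neq 0$ since $d\neq 0$. Substituting this back into $a\otimes b=c\otimes d$ gives $a\otimes b=(\gamma c)\otimes b$, i.e.\ $(a-\gamma c)\otimes b=0$, and since $b\neq 0$ this yields $a=\gamma c$. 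Setting $\alpha=\gamma$ then gives $a=\alpha c$ and $b=\alpha^{-1}d$, as required. (If one prefers to use the inner-product identity directly, as the text hints: comparing $\innerprod{a\otimes b}{a\otimes b}=\|a\|^2\|b\|^2$ with $\innerprod{a\otimes b}{c\otimes d}=\innerprod{a}{c}\innerprod{b}{d}$, together with $\|a\|\|b\|=\|c\|\|d\|$, forces equality in the Cauchy--Schwarz bounds $|\innerprod{a}{c}|\le\|a\|\|c\|$ and $|\innerprod{b}{d}|\le\|b\|\|d\|$, so $c$ is a scalar multiple of $a$ and $d$ of $b$, and the two scalars are mutually inverse because $a\otimes b\neq 0$.) The only point requiring a word of care is exactly this nonvanishing hypothesis, which is what makes $\alpha^{-1}$ meaningful; everything else is routine.
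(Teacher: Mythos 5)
Your proof is correct. Note that the paper does not actually supply an argument for this Fact: it is stated with only the one-line remark that it ``follows directly from'' the multiplicativity $\innerprod{a \otimes b}{c \otimes d} = \innerprod{a}{c}\innerprod{b}{d}$. Your parenthetical Cauchy--Schwarz argument is the fleshed-out version of that hint, but your primary argument --- slicing with the partial functional $\innerprod{v}{\cdot}\otimes\id_V$ and then cancelling $b$ from $(a-\gamma c)\otimes b=0$ --- is cleaner and more general: it is purely linear-algebraic, needs no inner product at all (any functional with $v^*(c)\neq 0$ works), and so holds over any field, whereas the Cauchy--Schwarz route genuinely uses the Hilbert-space structure. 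You are also right that the statement as printed is false for the zero tensor (e.g.\ $a=b=c=0$, $d\neq 0$ admits no $\alpha$ with $b=\alpha^{-1}d$), a degenerate case the paper silently ignores; your observation that all uses in the paper involve nonzero product states is the correct way to discharge it. No gaps.
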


\subsection{The product case}

Returning to the proof of the main theorem, we first handle the simpler case of product operators, that is, operators of the form (F1).

\begin{lemma}\label{lem:product-case}
Let $\mathcal R = (R, \mu)$ be an enhanced Yang-Baxter operator with $R = F \otimes G$. Then $I_{\mathcal R}$ is constant on knots.
\end{lemma}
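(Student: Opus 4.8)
The plan is to compute $I_{\mathcal R}(b^{\trclose})$ directly for a braid $b$ whose closure is a knot, and show the answer does not depend on $b$. Since $R = F \otimes G$, the representation $\rho_n^{(R)}$ sends each generator $\sigma_j$ to $\id^{\otimes j-1} \otimes (F \otimes G) \otimes \id^{\otimes n-j-1}$. The key observation is that this operator acts on strand $j$ by $F$ and on strand $j+1$ by $G$ simultaneously, so that $\rho_n^{(R)}(b)$ is a tensor product of \emph{single-strand} operators: on strand $i$ it is some word in $F$ and $G$ determined by the overcrossings and undercrossings that strand $i$ participates in. Concretely, if $b = \sigma_{j_1}^{\epsilon_1}\cdots \sigma_{j_\ell}^{\epsilon_\ell}$, then reading the braid left to right, strand $i$ accumulates a factor of $F^{\epsilon}$ (resp. $G^{\epsilon}$) each time it is the lower-indexed (resp. higher-indexed) strand of a crossing. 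Because $R$ commutes with $\mu \otimes \mu$ and (by Fact~\ref{fact:equal-product-states}-type reasoning applied to the first enhancement condition, or directly) $\mu$ must intertwine $F$ and $G$ appropriately, the factor $\mu^{\otimes n}$ distributes strand-wise as well. Hence $\rho_n^{(R)}(b)\cdot\mu^{\otimes n}$ is a tensor product $\bigotimes_{i=1}^n T_i$ of operators on $V$, and $I_{\mathcal R}(b) = \prod_{i=1}^n \tr(T_i)$.

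The next step is to track what happens to each factor under the trace closure. In the closure, the strand endpoints are identified according to the permutation underlying $b$; when $b^{\trclose}$ is a knot, that permutation is an $n$-cycle, so all $n$ strands belong to a single component. Following this single cycle, the product $\prod_i \tr(T_i)$ should collapse: the trace closure glues the strands into one cycle, and — using cyclicity of the trace together with the enhancement conditions $\tr_2(R\cdot\mu\otimes\mu) = \tr_2(R^{-1}\cdot\mu\otimes\mu) = \mu$, which say precisely that ``tracing out one strand of a crossing leaves a clean $\mu$ on the other'' — each crossing can be contracted without changing the value. The upshot is that $I_{\mathcal R}(b^{\trclose})$ reduces to $\tr(\mu)$ raised to some power, or more precisely to a fixed scalar independent of which knot-presenting braid $b$ we started with. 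Alternatively, and perhaps more cleanly: since $I_{\mathcal R}$ is already known to be a link invariant by Theorem~\ref{turaev_thm}, it suffices to evaluate it on the unknot (presented by the trivial braid in $B_1$), giving $I_{\mathcal R}(\text{unknot}) = \tr(\mu)$, and then show that for a product $R = F\otimes G$ the Markov-move analysis forces the value on any knot to agree with this — i.e., the per-strand operators $T_i$ conspire, via the enhancement conditions and the $n$-cycle structure, to yield the same $\tr(\mu)$.

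I expect the main obstacle to be the bookkeeping of \emph{which} word in $F, G$ appears on each strand and verifying that the enhancement conditions \eqref{enh2}--\eqref{enh3} let each crossing be removed one at a time while preserving the value along the single cycle. In particular, one must be careful that $F$ and $G$ need not commute with each other or with $\mu$ individually; what saves us is that the relevant reductions only ever involve $\tr_2$ of a single $R^{\pm 1}$ against $\mu\otimes\mu$, which is governed exactly by \eqref{enh2}--\eqref{enh3}, and that invertibility of $\mu$ lets us cancel stray factors. A secondary subtlety is making sure the argument genuinely uses the knot hypothesis (single component / $n$-cycle) and would fail for a multi-component link — which it should, since for links the answer is instead a product over components, as anticipated in Lemma~\ref{lem:swap-product}.
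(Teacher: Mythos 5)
Your proposal has a genuine gap: it never uses the Yang--Baxter equation \eqref{yb}, and that equation is the engine of the paper's proof. Substituting $R = F \otimes G$ into \eqref{yb} and comparing product states (via Fact~\ref{fact:equal-product-states}) forces $F^2 = tF$ and $G^2 = sG$, hence by invertibility $F$ and $G$ are scalar multiples of the identity and $R = r\,\id_{V\otimes V}$. Everything then collapses to $I_{\mathcal R}(b) = r^{w(b)}\tr[\mu]^n$, after which one pins down $r = \pm 1$ by comparing the two unknot presentations $\sigma_1^\trclose$ and $(\sigma_1^{-1})^\trclose$, and finishes by noting that crossing changes (which suffice to unknot any knot) preserve $w(b) \bmod 2$ and $n$. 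Without the scalarity step, your per-wire words $W_i$ in $F^{\pm1}, G^{\pm1}$ are genuinely braid-word-dependent, and you have no mechanism to show $\prod_i \tr[W_i\mu]$ is the same across different knots.

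Two further problems with the mechanism you do propose. First, the enhancement conditions \eqref{enh2}--\eqref{enh3} describe tracing out an \emph{entire} tensor factor when a strand is closed off (Markov stabilization); they do not license ``contracting'' an individual crossing sitting in the interior of a braid, so the step ``each crossing can be contracted without changing the value'' has no justification. Second, your single-cycle picture is imported from the wrong case: for $R = F \otimes G$ there is no swap, so the circuit does not permute tensor factors at all, and the trace factorizes over the $n$ wires as $\prod_{i=1}^n \tr[W_i\mu]$ regardless of whether $b^\trclose$ is a knot or a multi-component link. The ``follow the wire through the closure'' collapse belongs to form (F2), i.e.\ Lemma~\ref{lem:swap-product}, not to the product case. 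Your fallback (``evaluate on the unknot since $I_{\mathcal R}$ is an invariant'') also does not help, since the content of the lemma is precisely that the value on an arbitrary knot agrees with the value on the unknot.
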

\begin{proof}
Substituting $R = F \otimes G$ into the Yang-Baxter equation \eqref{yb} yields
\begin{equation}
\Qcircuit @C=1em @R=.5em {
&\gate{F} &\qw	&\gate{F} &\qw\\
&\gate{G} &\gate{F} &\gate{G} &\qw\\
&\qw &\gate{G} &\qw &\qw\\
}
\raisebox{-17pt}{\qquad = \qquad}
\Qcircuit @C=1em @R=.5em {
&\qw &\gate{F}	&\qw &\qw\\
&\gate{F} &\gate{G} &\gate{F} &\qw\\
&\gate{G} &\qw &\gate{G} &\qw\\
}
\end{equation}
By applying \expref{Fact}{fact:equal-product-states} twice to the above,  we see that $F^2 = t F$ and $G^2 = s G$ for some constants $t, s$. By invertibility of $F$ and $G$, $F = t \id_V$ and $G = s \id_V$ and hence $R = r \id_{V \otimes V}$ where $r := ts$. Thus $I_{\mathcal R}(b) = r^{w(b)} \tr[\mu]^n$. 

If $\tr[\mu] = 0$ we are done, so assume $\tr[\mu] \neq 0$. Note that $\sigma_1^\trclose$ and $(\sigma_1^{-1})^\trclose$ (as trace-closures of elements of $B_2$) are both equivalent to the unknot $K$. Then $r^{-1} \tr[\mu]^2 = I_\mathcal R(K) = r \tr[\mu]^2$, and hence $r = \pm 1$. Finally, choose any knot $L$ and write it as the trace closure $b^\trclose$ of some braid $b$. As shown in \cite{K91}, $b^\trclose$ can be deformed into the unknot $K$ by changing some of the undercrossings of $b$ into overcrossings (or vice-versa.) However, each such replacement does not affect $w(b) \bmod 2$ or $n$; it follows that $I_\mathcal R(L) = I_\mathcal R(K)$. 
\end{proof}

\subsection{The swap case}

We now assume that $R$ has the form (F2), i.e., $R = (F \otimes G) \circ S$ for some $F, G \in \GL(V)$. We will first show that establishing this case amounts to understanding all of the commutation relations between $F$, $G$, $\mu$, and their inverses. Recall that the operator $\rho_n^{(R)}(b) \cdot \mu^{\otimes n} \in \End(V^{\otimes n})$ is associated with a corresponding circuit diagram, constructed in the obvious way from the braid diagram for $b$. One can also draw a trace closure for this circuit diagram by simply connecting the wires of the diagram just as we connect the strands of a braid; an example is shown in \expref{Figure}{fig:circuit-closure}. This diagrammatic picture can lead to some nice insights, as in the following lemma.

\begin{figure}[h]
\begin{center}
\includegraphics{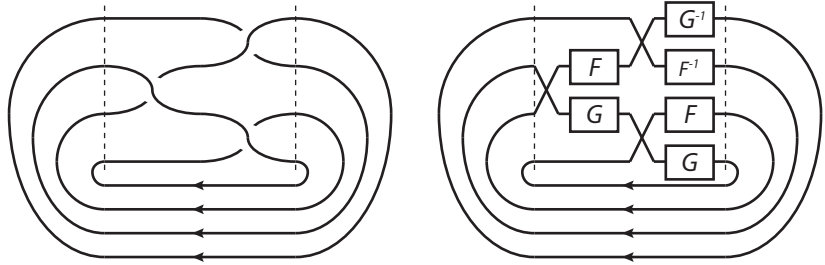}
\caption{\label{fig:circuit-closure} The trace closure of a braid (left) and the corresponding ``trace-closed'' circuit diagram (right), for a Yang-Baxter operator of the form (F2), with $\mu = \id$. By \expref{Lemma}{lem:swap-product}, the invariant is $\text{Tr}\bigl[ F^{-1}GGFFG^{-1} \bigr]$, where the sequence of operators is found by following the (in this case single) oriented wire.}
\end{center}
\end{figure}

\begin{lemma}\label{lem:swap-product}
Let $\mathcal R = (R, \mu)$ be an enhanced Yang-Baxter operator with $R = (F \otimes G) \circ S$. Let $b \in B^n$ and let $m$ be the number of components of $b^\trclose$. Then
$$
I_\mathcal R (b^\text{tr}) = \prod_{j=1}^m \emph{Tr}\bigl[ A_j \bigr]\,,
$$
where $A_j \in \End(V)$ is the product of elements of $M := \{F, G, F^{-1}, G^{-1}, \mu\}$ found by ``following the wire'' along the $j$th component of the trace-closed circuit diagram for $\rho_n^{(R)}(b) \cdot \mu^{\otimes n}$.
\end{lemma}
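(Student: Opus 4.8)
The plan is to exploit the structure of $R = (F \otimes G)\circ S$ to show that the circuit diagram for $\rho_n^{(R)}(b)\cdot\mu^{\otimes n}$ is, up to rewiring, a collection of disjoint ``threads'': because the $S$ permutes the two input wires, every crossing box simply routes one wire onto another while applying $F$ to the wire that ends up on top and $G$ to the wire that ends up on the bottom. Concretely, I would first observe that $R_i = \rho_n^{(R)}(\sigma_i)$ acts as the transposition $(i,i{+}1)$ on the tensor factors, dressed by $F$ on one factor and $G$ on the other. Composing generators along the braid word $b$ therefore realizes the underlying permutation $\pi_b \in S_n$ obtained by forgetting over/under information, together with an operator inserted on each wire segment between crossings. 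The trace $\tr[\rho_n^{(R)}(b)\cdot\mu^{\otimes n}]$ is then a ``permutation trace'': $\tr$ of a permutation-type operator with local operators on the wires decomposes as a product over the cycles of $\pi_b$ of the traces of the corresponding ordered products of local operators. This is precisely the standard fact that $\tr[P_\pi (X_1 \otimes \cdots \otimes X_n)] = \prod_{\text{cycles } c} \tr[\text{product of } X_i \text{ around } c]$, and I would state and use it in that form.

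The key steps, in order, are: (1) set up notation for the ``decorated permutation'' that $\rho_n^{(R)}(b)$ computes, tracking for each of the $n$ output wires which ordered product of $F^{\pm1}, G^{\pm1}$ factors has been applied to it as one reads the braid left to right; (2) incorporate $\mu^{\otimes n}$ by prepending a $\mu$ to each wire (here invertibility of $\mu$ is not even needed, only that it is a fixed endomorphism placed at the start of each strand); (3) invoke the permutation-trace identity to rewrite $\tr[\rho_n^{(R)}(b)\cdot\mu^{\otimes n}]$ as a product over the cycles of $\pi_b$; (4) identify the cycles of $\pi_b$ with the connected components of the trace closure $b^{\trclose}$ — this is exactly why the trace closure has $m$ components — and identify the ordered product of local operators around the $j$th cycle with the operator $A_j$ obtained by ``following the wire'' through the trace-closed circuit diagram, matching Figure~\ref{fig:circuit-closure}. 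Care is needed with orientation: following the wire the ``wrong'' way around a cycle replaces $A_j$ by a conjugate-transpose-like reversal, but since $\tr$ is invariant under cyclic permutation of a product and we are free to choose a starting point and direction consistently with the closure, the value $\tr[A_j]$ is well defined.

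The main obstacle, and the part deserving the most care, is bookkeeping: making precise the claim that composing the $R_i$ along $b$ really does produce a clean ``permutation $\times$ local operators'' form, with the local operators correctly ordered. The subtlety is that when the wire passing through a positive crossing, it may be the ``over'' strand (picking up $F$) or the ``under'' strand (picking up $G$), and at a negative crossing these roles invert along with an inverse; one must check that the order in which these factors are applied along a strand matches the order encountered when reading the braid diagram left to right, and that this is consistent with our stated convention that braid diagrams are read left-to-right while operator composition is written right-to-left. I would handle this by an induction on the length of the braid word: appending one generator $\sigma_i^{\pm1}$ to $b$ either merges two cycles or splits one, and in either case updates the per-wire operator products by left-multiplication by the appropriate $F^{\pm1}$ or $G^{\pm1}$; the permutation-trace identity is then applied once at the end. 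With the decorated-permutation formalism in place, steps (3) and (4) are essentially immediate, and Figure~\ref{fig:circuit-closure} serves as the worked example confirming the direction conventions.
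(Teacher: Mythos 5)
Your proposal is correct and follows essentially the same route as the paper: both reduce $\rho_n^{(R)}(b)\cdot\mu^{\otimes n}$ to a permutation operator decorated with ordered products of elements of $M$ on each wire, and then factor the trace over the cycles of the induced permutation (which are exactly the components of $b^{\trclose}$), the paper doing this by expanding the trace in a product basis and regrouping via the completeness relation rather than by citing the permutation-trace identity outright. The only cosmetic difference is your proposed induction on the braid word for the bookkeeping, which the paper handles by a single direct expansion.
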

This lemma is quite intuitive if one regards diagrams of the sort
illustrated in figure \ref{fig:circuit-closure} as tensor networks to
be contracted. Nevertheless, we give a formal linear-algebraic proof
as follows.
\begin{proof}
We first write
$$
\tr\bigl[\rho_n^{(R)}(b) \cdot \mu^{\otimes n}\bigr]
= \sum \bra{\lambda_1 \otimes \cdots \otimes \lambda_n} \rho_n^{(R)}(b) \cdot \mu^{\otimes n} \ket{\lambda_1 \otimes \cdots \otimes \lambda_n}\,,
$$
where each $\lambda_j$ ranges over a fixed basis of $V$. We then expand $\rho_n^{(R)}(b)$ into images of braid generators, and apply the resulting sequence of operators to the vector $\ket{\lambda_1 \otimes \cdots \otimes \lambda_n}$. The braid generators are mapped to $R$, while inverse generators are mapped to
\begin{equation}
\label{rinv}
R^{-1} = \begin{array}{c} \includegraphics[width=0.7in]{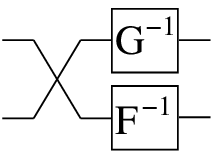} \end{array}.
\end{equation}
So, for each generator appearing in $b$, we exchange two tensor product components, and then apply a pair of elements of $M$ to those components: $F$ and $G$ for positive crossings, and $G^{-1}$ and $F^{-1}$ for negative crossings. The final result is an expression of the form
\begin{align}\label{eq:trace-closed-product}
\tr\bigl[\rho_n^{(R)}(b) \cdot \mu^{\otimes n}\bigr]
&= \sum \innerprod{\lambda_1 \otimes \cdots \otimes \lambda_n}{M_1 \lambda_{\pi(1)} \otimes \cdots \otimes M_n \lambda_{\pi(n)}}\notag \\
&= \sum \bra{\lambda_1}M_1\ket{\lambda_{\pi(1)}} \cdots \bra{\lambda_n}M_n\ket{\lambda_{\pi(n)}}\,,
\end{align}
where each $M_j$ is a product of elements of $M$, and $\pi \in S_n$.

The map $\pi$ is the permutation on strands induced by the braid $b$. Consider the braid diagram of $b$, drawn left-to-right. The strand beginning in position $1$ on the left terminates in position $\pi(1)$ on the right. If we take the trace closure of the diagram and continue following this strand, we will arrive at position $\pi(1)$ on the left, then $\pi^2(1)$ on the right, and so on until we finally return to position $1$ on the left. This completes a component of $b$ and a cycle of the permutation $\pi$. We will organize the sum \eqref{eq:trace-closed-product} by grouping all of the terms that appear in each component (and cycle) together. Note that these groupings involve disjoint sets of the $\lambda_k$, as the indices $k$ are partitioned by the cycles of $\pi$. We can thus separate the sums involving these disjoint sets, leaving us with a product
$$
\tr\bigl[\rho_n^{(R)}(b) \cdot \mu^{\otimes n}\bigr] = \prod_{j=1}^m a_j
$$
over the components of $b^\text{tr}$. Each term $a_j$ is straightforward to understand separately. For simplicity, let us only consider the component $j=1$ that contains the strand which begins in position $1$ on the left. We rearrange the terms involved so that they occur in the order determined by $\pi$. We then have
\begin{align*}
a_1 
&= \sum \bra{\lambda_1}M_1\ket{\lambda_{\pi(1)}} \bra{\lambda_{\pi(1)}}M_{\pi(1)}\ket{\lambda_{\pi^2(1)}} \cdots \bra{\lambda_{\pi^{l-1}(1)}}M_{\pi^{l-1}(1)}\ket{\lambda_{1}}\\
&= \tr\bigl[ M_1 M_{\pi(1)} \cdots M_{\pi^{l-1}(1)}\bigr]\,,
\end{align*}
where $l$ is the size of the cycle of $\pi$ containing $1$, i.e. $\pi^l(1) = 1$. From this expression, it is clear how to compute the operators $A_j$ in the theorem statement. First draw the circuit diagram for $\rho_n^{(R)}(b)$, and connect the wires together just as in the trace closure of a braid. Start at any point along any wire in the $j$th component, and follow that wire until back to the same starting point. As this is done, write down each operator from $M$ encountered along the way, in order. Their product is precisely the operator $A_j$.
\end{proof}

We remark that, in the case where $b^\trclose$ is a knot, the
permutation $\pi$ in the proof is an $n$-cycle, and the entire
invariant becomes simply the trace of the product of operators from
$M$ found by ``following the wire'' as above, just as in the example
in \expref{Figure}{fig:circuit-closure}. Thus, for a knot, the effect
of changing an undercrossing of $b$ into an overcrossing is to replace
a pair of operators $F$ and $G$ in the product being traced over with
$G^{-1}$ and $F^{-1}$, respectively. We will use this fact, and the
relations among the elements of $M$, to show that such a change has no
effect on the value of the invariant.  

\begin{lemma}\label{lem:commutation-relations}
Let $(R, \mu)$ be an enhanced Yang-Baxter operator, with $R$ of the form (F2) and $\mu$ invertible. Then the operators $F, G, \mu$ all pairwise commute, and $GF = (GF)^{-1}$.
\end{lemma}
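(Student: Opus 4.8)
The plan is to read off three algebraic identities from the hypotheses — one from the Yang-Baxter equation \eqref{yb} and one from each of the two trace conditions in \expref{Definition}{def:eyb} — and then combine them. Throughout, write $R = (F \otimes G)\circ S$, and record three easy facts: $S$ is an involution; $(F\otimes G)^{-1} = F^{-1}\otimes G^{-1}$, so that $R^{-1} = S\circ(F^{-1}\otimes G^{-1})$; and $\mu\otimes\mu$ commutes with $S$, being symmetric under the swap.

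First I would substitute $R = (F\otimes G)S$ into \eqref{yb} and evaluate both sides on a generic vector $a\otimes b\otimes c \in V^{\otimes 3}$. Tracking how the three swaps permute the tensor slots and where the copies of $F$ and $G$ accumulate, the left-hand side becomes $F^2 c\otimes(GF)b\otimes G^2 a$ while the right-hand side becomes $F^2 c\otimes(FG)b\otimes G^2 a$; comparing the middle slot gives $FG = GF$. (Equivalently, one can see this directly from the circuit-diagram form of \eqref{yb}, by following the wire running through the middle strand.)

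Next I would treat the two trace conditions. Since $\mu\otimes\mu$ commutes with $S$, we have $R\cdot(\mu\otimes\mu) = (F\mu\otimes G\mu)S$ and $R^{-1}\cdot(\mu\otimes\mu) = S(F^{-1}\mu\otimes G^{-1}\mu)$. The one computation needed is the pair of identities $\tr_2\bigl((A\otimes B)S\bigr) = AB$ and $\tr_2\bigl(S(A\otimes B)\bigr) = BA$, both of which follow from $\tr_2(S) = \id_V$ together with $(\id\otimes B)S = S(B\otimes\id)$. Applying these, the conditions $\tr_2(R\cdot\mu\otimes\mu) = \mu$ and $\tr_2(R^{-1}\cdot\mu\otimes\mu) = \mu$ become $F\mu G\mu = \mu$ and $G^{-1}\mu F^{-1}\mu = \mu$ respectively; cancelling the invertible $\mu$ on the right yields $F\mu G = \id_V$ and $\mu = GF$.

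Finally I would assemble the pieces. From $FG = GF$ and $\mu = GF = FG$, all of $F$, $G$, $\mu$ are words in the commuting pair $\{F, G\}$, so they pairwise commute. Substituting $\mu = GF$ into $F\mu G = \id_V$ gives $FGFG = \id_V$, hence $(GF)^2 = \id_V$ (using $FG = GF$), which is precisely $GF = (GF)^{-1}$. I do not expect a genuine obstacle here: the proof is bookkeeping, and the only delicate point is keeping the order of factors straight in the two partial-trace identities — it is exactly the difference between $\tr_2((A\otimes B)S) = AB$ and $\tr_2(S(A\otimes B)) = BA$ that lets the $R$-condition deliver $F\mu G = \id_V$ and the $R^{-1}$-condition deliver $\mu = GF$.
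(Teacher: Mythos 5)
Your proof is correct and follows essentially the same route as the paper's: commutativity of $F$ and $G$ extracted from the Yang--Baxter equation on product states, and the two identities $\mu = (GF)^{-1}$ and $\mu = GF$ extracted from the partial-trace conditions, then combined. The only difference is that the paper additionally derives scalar relations $F\mu = q\mu F$, $G\mu = q^{-1}\mu G$ from the condition that $R$ commutes with $\mu\otimes\mu$ and shows $q=1$, whereas you get the commutation of $\mu$ with $F$ and $G$ for free from $\mu = GF$ together with $FG = GF$ --- a harmless streamlining that makes the hypothesis \eqref{enh1} unnecessary for this step.
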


\begin{proof}

We first show that $F$ and $G$ must commute. Substituting (F2) into the left-hand side of the Yang-Baxter equation and evaluating on an arbitrary product state $\ket{a} \ket{b} \ket{c}$ yields:
\[
\begin{array}{c}\includegraphics[width=2.7in]{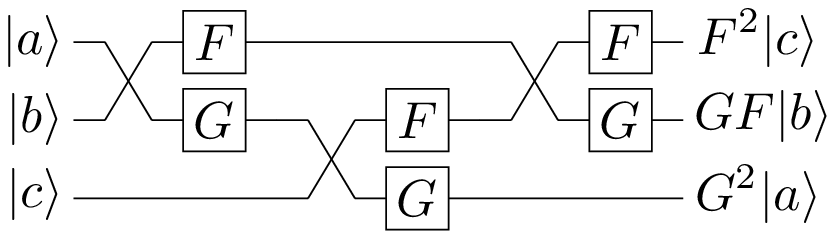} \end{array} .
\]
The right-hand side yields
\[
\begin{array}{c} \includegraphics[width=2.7in]{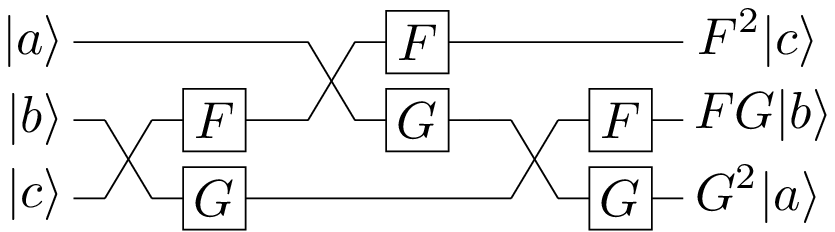} \end{array} .
\]
Thus $GF \ket{b} = FG \ket{b}$ for all $\ket{b}$, which implies $GF = FG$.

Next we consider the commutation relations of $\mu$ with $F$ and $G$. Recall that $R$ must commute with 
$\mu \otimes \mu$, so that
\begin{equation}
(G \mu) \otimes (F \mu) = (\mu G) \otimes (\mu F).
\end{equation}
Hence there exists invertible $q \in \CC$ such that
\begin{eqnarray}
F \mu & = & q \mu F \label{fcom} \\
G \mu & = & q^{-1} \mu G \label{gcom}\,.
\end{eqnarray}
\noindent We will show that  $q = 1$. Recall that $R$ must satisfy the partial trace condition \eq{enh2}, i.e.,
\begin{equation}
\sum_{k=1}^d \begin{array}{c} \includegraphics[width=1.4in]{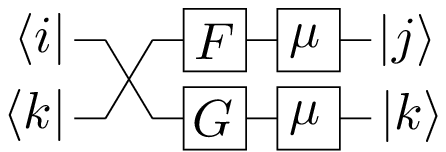} \end{array}
= \bra{j} \mu \ket{i}.
\end{equation}
That is,
\begin{equation}
\sum_{k=1}^d \bra{j} \mu F \ket{k} \bra{k} \mu G \ket{i} = \bra{j} \mu \ket{i}.
\end{equation}
Using the completeness relation $\sum_{k=1}^d \ket{k} \bra{k} = \id$, we see that
\begin{equation}\label{gmfm}
\mu F \mu G = \mu.
\end{equation}
Simplifying the above using only the invertibility of $\mu$, $F$, and $G$, we have
\begin{equation}\label{gmfmA}
\mu^{-1} = GF\,.
\end{equation}
If instead we apply \eqref{fcom} to \eqref{gmfm}, we get $\mu^{-1} = q FG$. Since $F$ and $G$ commute, we conclude that $q = 1$.

Finally, consider the second partial trace condition \eq{enh3}. Recalling the form \eq{rinv} of $R^{-1}$, we have
\begin{equation}
\sum_{k=1}^d \begin{array}{c} \includegraphics[width=1.4in]{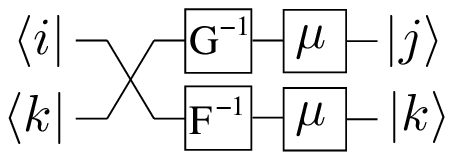} \end{array} = \bra{j} \mu \ket{i}
\end{equation}
which implies
\begin{equation}
\label{fimugimu}
\mu G^{-1} \mu F^{-1} = \mu.
\end{equation}
As before, we simplify using only the invertibility of $\mu$, $F$, and $G$, to get
\begin{equation}\label{gmfmB}
\mu^{-1} = (GF)^{-1}\,.
\end{equation}
Combining \eqref{gmfmB} with \eqref{gmfmA} proves the last claim in
the lemma statement.
\end{proof}

We are now prepared to prove the main theorem, for the case of form (F2).

\begin{lemma}
Let $\mathcal R = (R, \mu)$ be an enhanced Yang-Baxter operator with
$R = (F \otimes G) \circ S$ and invertible $\mu$. Then $I_\mathcal R$ is constant on knots.
\end{lemma}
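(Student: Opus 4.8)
The plan is to read off $I_\mathcal{R}$ directly from the two preceding lemmas. By \expref{Lemma}{lem:swap-product}, since $b^\trclose$ is a knot the permutation $\pi$ is an $n$-cycle, so $I_\mathcal{R}(b^\trclose) = \tr[A]$, where $A \in \End(V)$ is the single product of elements of $M = \{F, G, F^{-1}, G^{-1}, \mu\}$ obtained by following the one wire of the trace-closed circuit diagram all the way around. First I would note that, by \expref{Lemma}{lem:commutation-relations}, all the factors of $A$ commute, so $\tr[A]$ does not depend on the order in which they occur, and $A$ is determined by the net number of copies of $F$, of $G$, and of $\mu$ that appear. Each of the $n$ strands contributes exactly one $\mu$ (from $\mu^{\otimes n}$) and the wire visits every strand, so $\mu$ appears $n$ times; following the wire through a positive crossing on its two passes picks up one $F$ and one $G$, and through a negative crossing one $F^{-1}$ and one $G^{-1}$, so the net exponents of $F$ and of $G$ are both equal to the writhe $w(b)$. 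Hence $A = (FG)^{w(b)}\,\mu^{n}$.

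Next I would simplify using \expref{Lemma}{lem:commutation-relations} once more: there $\mu^{-1} = GF = FG$ and $GF = (GF)^{-1}$, so $(FG)^{2} = \id$ and therefore $\mu^{2} = \id$. Consequently $A = \mu^{-w(b)}\mu^{n} = \mu^{\,n-w(b)}$, so $I_\mathcal{R}(b^\trclose) = \tr[\mu^{\,n-w(b)}]$, which, since $\mu$ is an involution, depends only on the parity of $n - w(b)$.

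It remains to fix that parity. Because $b^\trclose$ is a knot, the permutation $\pi$ underlying $b$ is an $n$-cycle, so $\mathrm{sgn}(\pi) = (-1)^{n-1}$; but $\mathrm{sgn}(\pi)$ is also $(-1)$ raised to the total number of crossings of $b$, which has the same parity as $w(b)$. Thus $n - 1 \equiv w(b) \pmod 2$, i.e.\ $n - w(b)$ is odd, so $\mu^{\,n-w(b)} = \mu$ and $I_\mathcal{R}(b^\trclose) = \tr[\mu]$ for every knot, as claimed. (Alternatively, with $A = (FG)^{w(b)}\mu^{n}$ and $\mu^{2} = \id$ in hand, one could skip the parity count and mimic \expref{Lemma}{lem:product-case}: deform $b^\trclose$ to the unknot by crossing changes, observing via the remark after \expref{Lemma}{lem:swap-product} that each such change multiplies $A$ by $\mu^{\pm 2} = \id$.) The only delicate point is the bookkeeping in the first paragraph — verifying that following the single wire picks up precisely $n$ copies of $\mu$ and exactly one of $\{F,G\}$ (resp.\ $\{F^{-1},G^{-1}\}$) at each crossing — after which everything is a one-line computation enabled by the commutation relations.
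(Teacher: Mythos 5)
Your proof is correct, but it takes a genuinely different route from the paper's. The paper argues by \emph{invariance under crossing changes}: using \expref{Lemma}{lem:swap-product} and the commutativity from \expref{Lemma}{lem:commutation-relations}, a crossing change replaces a factor $FG$ in the traced product by $G^{-1}F^{-1}$, and since $GF=(GF)^{-1}$ this leaves the trace unchanged; it then invokes the fact (from~\cite{K91}) that any knot can be unknotted by crossing changes. You instead compute the invariant in closed form: commutativity reduces the ``follow the wire'' product to $(FG)^{w(b)}\mu^{n}$, the relation $\mu^{-1}=GF$ together with $(GF)^{2}=\id$ gives $\tr\bigl[\mu^{\,n-w(b)}\bigr]$, and the parity identity $n-1\equiv w(b)\pmod 2$ (sign of an $n$-cycle versus parity of the crossing number) forces $n-w(b)$ odd, so $I_{\mathcal R}(K)=\tr[\mu]$ for every knot. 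Your argument yields strictly more --- an explicit value $\tr[\mu]$ rather than mere constancy --- and replaces the external topological input (unknotting by crossing changes) with an elementary permutation-sign computation; the paper's version buys a closer structural parallel with \expref{Lemma}{lem:product-case} and with the remark following \expref{Lemma}{lem:swap-product}. Two small points of bookkeeping: the relation $\mu^{-1}=GF$ that you need is established as equation \eq{gmfmA} inside the \emph{proof} of \expref{Lemma}{lem:commutation-relations} but does not appear in its statement, so you should either cite that equation or re-derive it; and your claim that the single wire picks up exactly one of $\{F,G\}$ (resp.\ $\{F^{-1},G^{-1}\}$) per crossing is correct precisely because a one-component closure traverses both input legs of every gate --- this deserves the explicit sentence you give it, since it is where the hypothesis that $b^{\trclose}$ is a knot is used.
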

\begin{proof}
Let $L$ be a knot and $b$ a braid such that $b^\trclose$ is equivalent to $L$. As shown in~\cite{K91}, $L$ can be disentangled into the unknot by
replacing some of the undercrossings in $b$ with overcrossings (or vice-versa.) Each such replacement in the diagram corresponds to applying the replacement rule $\sigma_j^\pm \mapsto \sigma_j^\mp$ to some word for $b$. In the representation $\rho_n^{(R)}$, the corresponding change is $R_j^{\pm} \mapsto R_j^\mp.$ We will show that transformations of this kind leave the invariant $I_\mathcal R$ unchanged.

By \expref{Lemma}{lem:swap-product}, the invariant evaluated at $L$ takes the form
$$
I_\mathcal R (b^\text{tr}) = \tr \bigl[ A_1 A_2 \cdots A_l \bigr]\,,
$$
where each $A_j$ is an element of $M = \{F, F^{-1}, G, G^{-1}, \mu \}$. By \expref{Lemma}{lem:commutation-relations}, all of the operators in the sequence $\{A_j\}_j$ commute. Let $b_0$ be the braid produced by changing some particular undercrossing of $b$ into an overcrossing. As we saw in the proof of \expref{Lemma}{lem:swap-product}, this change has the following effect on the sequence $\{A_j\}_j$: for some pair of indices $(t, s)$ determined by the position of the undercrossing, we replace $A_t = F$ with $G^{-1}$, and we replace $A_s = G$ with $F^{-1}$. By commutativity of the $A_j$, we then have
$$ 
I_\mathcal R (b^\text{tr}) = \tr \bigl[ FG A \bigr]
\qquad \text{and} \qquad
I_\mathcal R (b_0^\text{tr}) = \tr \bigl[ G^{-1}F^{-1} A \bigr]\,,
$$
where
$$
A = \prod_{j \notin \{s, t\}} A_j\,.
$$
By \expref{Lemma}{lem:commutation-relations}, $FG =
G^{-1}F^{-1}$. Hence $I_\mathcal R (b^\text{tr}) = I_\mathcal R
(b_0^\text{tr})$. Thus we have shown that $I_\mathcal R$ is unchanged
when overcrossings are changed to undercrossings. Similarly,
$I_\mathcal R$ is unchanged when undercrossings are changed to
overcrossings. It follows that $I_\mathcal R$ is constant on knots.
\end{proof}

\section{Concluding Remarks}

In the proof of \expref{Theorem}{mainthm} we saw that if $R$ is of the
form (F1) then $I_{\mathcal{R}}(L)$ can depend only on the number of
components in the link $L$. In the case that $R$ is of the form (F2)
we obtained only the weaker statement that $I_{\mathcal{R}}(K)$ takes
the same value on all knots (\emph{i.e.} single-component links)
$K$. It is tempting to conjecture that even in case (F2),
$I_{\mathcal{R}}(L)$ can depend only on the number of components of
$L$. However, this is not true in general, as shown by the following
counterexample. Let $\mathcal{R} = (R,1,1,\mu)$ where $R$ is of the
form (F2) and 
\begin{eqnarray}
F & = & \left[ \begin{array}{cc} 1 & 0 \\ 0 & 1 \end{array} \right] \\
G & = & \left[ \begin{array}{cc} 1 & 0 \\ 0 & -1 \end{array} \right] \\
\mu & = & G.
\end{eqnarray}
One can verify that $\mathcal{R}$ is an enhanced Yang-Baxter operator
and $I_{\mathcal{R}}$ evaluated on the Hopf link yields 4, whereas
$I_{\mathcal{R}}$ evaluated on the two-component unlink yields 0.

Our main result (\expref{Theorem}{mainthm}) shows that to obtain a
nontrivial knot invariant by Turaev's construction it is necessary for
the Yang-Baxter operator to be entangling. It is thus natural to ask
whether this is also a sufficient condition. We can see that this
is not the case by constructing an example of an entangling
Yang-Baxter solution $R_e$ and a corresponding enhanced Yang-Baxter
operator $\mathcal{R}_e$ such that $I_{\mathcal{R}_e}$ is
trivial. Specifically, with the aid of \cite{H93, D03}, we find
the following example. Let
\begin{eqnarray}
R_e & = & \left[ \begin{array}{cccc}
1 & 0 & 0 & 0 \\
0 & 0 & 1 & 0 \\
0 & 1 & 0 & 0 \\
0 & 0 & 0 & -1
\end{array} \right] \\
\mu_e & = & \left[ \begin{array}{cc}
1 & 0 \\
0 & -1 \end{array}
\right],
\end{eqnarray}
with $\alpha = \beta = 1$. Interpreted as a quantum gate, $R_e$ is
entangling. (In quantum computation language it is a SWAP gate
followed by a controlled-phase gate.) One can verify that $R_e$
satisfies the Yang-Baxter equation and $\mathcal{R}_e$ is an enhanced
Yang-Baxter operator (\emph{i.e.} satisfies \eq{enh1}, \eq{enh2}, and
\eq{enh3}).  Thus, by \expref{Theorem}{turaev_thm},
$I_{\mathcal{R}_e}$ is a link invariant. However, $R_e =
R_e^{-1}$. Thus, $I_{\mathcal{R}_e}$ does not distinguish
overcrossings from undercrossings. Therefore $I_{\mathcal{R}_e}(L)$
depends only on the number of components of link $L$.

\vspace{11pt}
\noindent
\textbf{Acknowledgments:} Portions of this paper are a contribution of NIST, an agency of the US government, and are not subject to US copyright.

\bibliography{ybtrivial}

\appendix

\section{Non-invertible $\mu$}
\label{noninvert}

In this appendix, we show that the case of non-invertible $\mu$
reduces to the case of invertible $\mu$. Thus, by restricting to
invertible $\mu$, as we have done in our main theorem, there is no
loss of essential generality.

\begin{lemma}
\label{invertible}
Suppose $R$ is a Yang-Baxter solution and $\mathcal{R} =
(R,\alpha,\beta,\mu)$ is an enhanced Yang-Baxter operator yielding
link invariant $I_{\mathcal{R}}$. Suppose $\mu$ is not invertible but
$I_{\mathcal{R}}$ is not identically zero. Then there exists some
lower-dimensional enhanced Yang-Baxter operator $\mathcal{R}' =
(R',\alpha,\beta,\mu')$ with invertible $\mu'$ such that
$I_{\mathcal{R}'} = I_{\mathcal{R}}$.
\end{lemma}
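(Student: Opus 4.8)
The plan is to restrict all of the data to the stable image of $\mu$. Put $d := \dim V$ and $W := \mathrm{im}(\mu^{d})$, so that $\mu W = W$, $\mu|_W$ is invertible, and by Fitting's lemma $V = W \oplus W'$ with $W' := \ker(\mu^{d})$, the projection $P$ of $V$ onto $W$ along $W'$ commuting with $\mu$ and $\mu|_{W'}$ nilpotent. Since $\mu$ is not invertible we have $0 \neq \ker\mu \subseteq W'$, hence $\dim W < d$; and $W \neq 0$, since otherwise $\mu$ is nilpotent, so $\mu^{\otimes n}$ is nilpotent and commutes with $\rho_n^{(R)}(b)$ (as $R$ commutes with $\mu\otimes\mu$ by \eqref{enh1}), forcing $\tr[\rho_n^{(R)}(b)\mu^{\otimes n}] = 0$ for every braid $b$ and contradicting $I_{\mathcal R}\not\equiv 0$. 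The candidate is $\mathcal R' := (R|_{W\otimes W},\,\alpha,\,\beta,\,\mu|_W)$.

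I would first record the stability of $W\otimes W$. From \eqref{enh1}, $R$ commutes with $\mu\otimes\mu$, hence with $(\mu\otimes\mu)^{d} = \mu^{d}\otimes\mu^{d}$, so $R$ maps $\mathrm{im}(\mu^{d}\otimes\mu^{d}) = W\otimes W$ onto itself; likewise for $R^{-1}$, so $R' := R|_{W\otimes W}$ is invertible, and restricting the Yang--Baxter equation \eqref{yb} to the $R^{\pm1}$-invariant subspace $W^{\otimes 3}\subseteq V^{\otimes 3}$ shows that $R'$ is a Yang--Baxter operator. Moreover $R$ preserves the generalized eigenspaces of $\mu\otimes\mu$, and these partition into $W\otimes W$ (the nonzero eigenvalues) and its complement $C := W'\otimes V + V\otimes W'$ (the eigenvalue $0$); hence $R$ commutes with $P\otimes P$, and therefore with $\mu_W\otimes\mu_W$, where $\mu_W := P\mu = \mu P$ satisfies $\mu_W|_W = \mu|_W$ and $\mu_W|_{W'} = 0$.

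The invariant then descends easily. As $R$ commutes with both $\mu\otimes\mu$ and $P\otimes P$, $\rho_n^{(R)}(b)$ commutes with $\mu^{\otimes n}$ and $P^{\otimes n}$, hence with $\mu_W^{\otimes n} = \mu^{\otimes n}P^{\otimes n}$ and with $N := \mu^{\otimes n} - \mu_W^{\otimes n} = \mu^{\otimes n}(\id - P^{\otimes n})$; and $N$ is nilpotent, being $\mu^{\otimes n}$ on $\ker(P^{\otimes n})$, where $\mu^{\otimes n}$ acts nilpotently. Thus $\tr[\rho_n^{(R)}(b)N] = 0$, so $\tr[\rho_n^{(R)}(b)\mu^{\otimes n}] = \tr[\rho_n^{(R)}(b)\mu_W^{\otimes n}]$; and since $\mathrm{im}(\mu_W) = W$ and $\rho_n^{(R)}(b)$ preserves $W^{\otimes n}$, the right-hand side equals $\tr_{W^{\otimes n}}[\rho_n^{(R')}(b)(\mu|_W)^{\otimes n}]$. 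Multiplying by the unchanged prefactor $\alpha^{-w(b)}\beta^{-n}$ gives $I_{\mathcal R} = I_{\mathcal R'}$.

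It remains to check that $\mathcal R'$ is an enhanced Yang--Baxter operator, and this is the step I expect to be the main obstacle. Condition \eqref{enh1} for $\mathcal R'$ is simply the restriction of \eqref{enh1} to $W\otimes W$, so the real content lies in the partial-trace conditions \eqref{enh2} and \eqref{enh3}: one must deduce $\tr_2(R'\cdot\mu|_W\otimes\mu|_W) = \alpha\beta\,\mu|_W$ (and the analogous identity for $(R')^{-1}$ and $\alpha^{-1}$) from the corresponding identities on all of $V$. The route I would take is to compare the partial trace over $W$ with that over $V$ in a basis adapted to $V = W\oplus W'$: since $R(\mu_W\otimes\mu_W) = (P\otimes P)\,R(\mu\otimes\mu)\,(P\otimes P)$ has image in $W\otimes W$ and kernel containing $C$, its $V$-partial-trace restricted to $W$ is exactly $\tr_2(R'\cdot\mu|_W\otimes\mu|_W)$; the task then reduces to showing that replacing $\mu$ by $\mu_W$ inside \eqref{enh2} is harmless --- equivalently, that after projecting onto $W$ the nilpotent part $\nu := \mu - \mu_W$ of $\mu$ contributes nothing to $\tr_2(R(\mu\otimes\mu))$. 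This last point is the crux: it is the only place where the interplay between $R$ and the non-semisimple structure of $\mu$ genuinely matters, and handling it requires exploiting the commutation \eqref{enh1} (and likely the Yang--Baxter equation for $R$) rather than any purely formal manipulation. Granting it, $\mathcal R'$ is enhanced, and being lower-dimensional with invertible $\mu|_W$ and $I_{\mathcal R'} = I_{\mathcal R}$, it is the object required by the lemma.
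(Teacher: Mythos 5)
Your construction is, in substance, the paper's: the paper restricts to $\mathrm{im}(\mu)$ and iterates until $\mu$ becomes invertible, which lands exactly on your stable image $W=\mathrm{im}(\mu^{d})$; your treatment of the equality $I_{\mathcal R}=I_{\mathcal R'}$ (via the commuting nilpotent $N$ and the fact that $\rho_n^{(R)}(b)$ preserves $W^{\otimes n}$) is correct and in fact more careful than the paper's one-line version. The genuine gap is the one you flag yourself and then wave through with ``granting it'': you never verify the partial-trace conditions \eq{enh2}--\eq{enh3} for $\mathcal R'$, and the one-step route you sketch does not close, because $R\cdot\mu\otimes\mu$ only has range in $\mathrm{im}(\mu)\otimes\mathrm{im}(\mu)$, which is in general strictly larger than $W\otimes W$; so the partial trace does not localize to $W$ in a single step, and your reduction to showing that the nilpotent part $\nu$ ``contributes nothing'' is left entirely unproved. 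A lemma whose conclusion asserts that $\mathcal R'$ \emph{is} an enhanced Yang--Baxter operator cannot be closed with that condition outstanding.

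The fix is the paper's iteration, and it is purely formal --- contrary to your guess, neither the Yang--Baxter equation nor any interplay with the Jordan structure of $\mu$ is needed, only \eq{enh1} and the invertibility of $R$. Set $U=\mathrm{im}(\mu)$. By \eq{enh1} and surjectivity of $R$, the operator $T=R\cdot\mu\otimes\mu$ has range $\mathrm{im}(\mu\otimes\mu\cdot R)=U\otimes U$, and $R(U\otimes U)=U\otimes U$. Computing $\tr_2(T)$ in a basis of $V$ adapted to a splitting $V=U\oplus U''$, every term indexed by a complementary basis vector vanishes (its dual functional kills the second factor of the range $U\otimes U$), so $\tr_2(T)=\sum_{e_k\in U}(\id\otimes\bra{e_k})T(\id\otimes\ket{e_k})$; restricting the first factor to $U$ gives $\tr_2^{U}(R|_{U\otimes U}\cdot\mu|_U\otimes\mu|_U)=\alpha\beta\,\mu|_U$ verbatim, and likewise for $R^{-1}$ and \eq{enh3}, while \eq{yb} and \eq{enh1} restrict trivially to the invariant subspace. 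Thus the restriction to $\mathrm{im}(\mu)$ is again an enhanced Yang--Baxter operator with the same invariant, and iterating at most $d$ times reaches $U_k=\mathrm{im}(\mu^{k})$ stabilizing at your $W$, where $\mu|_W$ is invertible (the degenerate case $W=0$ being excluded by your own nilpotency argument). With that replacement for your final paragraph the proof is complete.
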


\begin{proof}
Let $V$ denote the range of the linear operator $\mu$. By \eq{enh1},
the range of $R \cdot \mu \otimes \mu$ is the same as the range of
$\mu \otimes \mu \cdot R$, namely $V \otimes V$. Thus, $R$ maps $V
\otimes V$ to itself, as does $R^{-1}$. We can therefore replace each
instance of $R$ in the trace defining $I_{\mathcal{R}}$ with its
restriction $R'$ to $V \otimes V$, and similarly replace $R^{-1}$ and
$\mu$. The invariant $I_{\mathcal{R}'}$ with $\mathcal{R}' =
(R',\alpha,\beta,\mu')$ will be identically equal to
$I_{\mathcal{R}}$. If $\mu'$ is invertible then we are done. If it is
not, we may iterate this procedure until invertible $\mu'$ is reached
at some lower dimension. If $\mu'$ were to remain uninvertible even on
a one-dimensional space then $\mu' = 0$, $I_{\mathcal{R}'}$ would be
identically zero, and therefore $I_{\mathcal{R}}$ would be identically
zero. By assumption, this is not the case.
\end{proof}

\end{document}